\documentclass[letterpaper,10pt,conference]{ieeeconf}

\IEEEoverridecommandlockouts
\usepackage{cite}
\usepackage{amsmath,amssymb,amsfonts,mathtools}
\usepackage{graphicx}
\usepackage{textcomp}
\usepackage{comment}
\usepackage{xcolor}

\usepackage{algorithm}
\usepackage{algpseudocode}

\usepackage{amsthm}

\newtheorem{theorem}{Theorem}
\newtheorem{corollary}[theorem]{Corollary}
\newtheorem{definition}{Definition}
\newtheorem{lemma}[theorem]{Lemma}
\newtheorem{proposition}[theorem]{Proposition}

\newtheorem{problem}{Problem}

\newcommand{\longthmtitle}[1]{\mbox{}\emph{(#1):}}

\usepackage[hidelinks]{hyperref}

\begin{document}
\bstctlcite{BSTcontrol}

\title{\LARGE \bf Functional Observers for Output-Feedback Safety \\with High-Order Control Barrier Functions}

\author{Shima Sadat Mousavi, Pol Mestres,
and Aaron D. Ames %%
\thanks{The authors are with the Department of Mechanical and Civil
Engineering, California Institute of Technology, Pasadena, CA 91125, USA.
Emails: \texttt{smousavi,mestres,ames@caltech.edu}.
This research is supported by The Boeing Company.}
}

\maketitle

\begin{abstract}
This paper studies output-feedback safety filtering for linear systems using high-order control barrier functions (HOCBFs) when the full state is unavailable. Since the HOCBF conditions depend only on specific state functionals, full-state reconstruction is unnecessary; instead, their implementation requires only functional observability or detectability of these quantities. We show that the HOCBF gains determine the required functionals and hence their observability properties, coupling safety and estimation design. We exploit this structure using scalar functional observers and develop a polynomial co-design framework that characterizes HOCBF observability and detectability, observer existence and order, and convergence-rate limitations. A certified one-sided estimation-error bound then enables safe output-feedback
implementation.
\end{abstract}

\section{Introduction}

Safety-critical control enforces state constraints while preserving a
nominal control objective. Control Barrier Functions (CBFs) express
safety through forward invariance and incorporate constraints into
optimization-based controllers
\cite{Ames-Xu-Grizzle-Tabuada:17}.
Exponential and High-Order CBFs (HOCBFs) extend this framework to
higher-relative-degree constraints
\cite{Nguyen-Sreenath:16,Xiao-Belta:19}. Their implementation typically
requires state-dependent quantities, often motivating a full-state
observer when only outputs are measured. Full-state reconstruction,
however, may be unnecessary or even impossible for unobservable
systems. This raises the central question: what state information is
actually required by an HOCBF safety filter, and how should the HOCBF
gains (i.e., the class $\mathcal{K}$ functions associated with the HOCBF construction) be chosen to make that
information estimable?

Several approaches extend CBF-based safety to partial or uncertain
state information. Measurement-robust CBFs account for state-estimation
or perception errors, including high-relative-degree constraints
\cite{Dean-Taylor-Cosner-Recht-Ames:21,
Cosner-Singletary-Taylor-Molnar-Bouman-Ames:21,
Oruganti-Naghizadeh-Ahmed:23}, while observer-based methods incorporate
estimation-error bounds into barrier constraints
\cite{Wang-Xu:22,Agrawal-Panagou:23}. %Related work considers robust output CBFs, joint observer--barrier synthesis, and backup CBFs under state uncertainty and bounded inputs \cite{Lindemann-Robey-Jiang-Das-Tu-Matni:24, Jagabathula-Basu-Jagtap:26, vanWijk-Molnar-Coogan-Majji-Ames-Burdick:26}.
Related work considers robust output CBFs and backup CBFs under
state uncertainty and bounded inputs \cite{Lindemann-Robey-Jiang-Das-Tu-Matni:24,
vanWijk-Molnar-Coogan-Majji-Ames-Burdick:26}.
Most closely related, \cite{Menner-Hussain-Lavretsky:26} studies
output-feedback control of LTI systems with operational constraints
using a full-state Luenberger observer and provides a simple condition
ensuring that safety is maintained after a transient despite
state-estimation error. These methods typically reconstruct the state
and then account for estimation error, rather than characterize the
weaker requirement of estimating only the functionals needed by the
HOCBF inequalities. Thus, they do not identify when output-feedback
safety remains possible without full-state reconstruction.

%Functional observers estimate prescribed state functionals and may exist even when the full system is unobservable. Their existence, functional observability and detectability, reduced-order construction, and assignable error dynamics are well studied \cite{Luenberger:66,MD:00,Fernando-Trinh-Jennings:10, Rotella-Zambettakis:16,KV:21,MD-TF:23,MD-TF:25}. In this literature, however, the functional is fixed. In an HOCBF safety filter, it depends on the HOCBF gains, which are design parameters. Gain selection therefore affects both the barrier condition and whether the required functional is observable or detectable, raising the question of how to select the gains to ensure estimability with fast convergence while maintaining a large admissible set.

Functional observers estimate prescribed state functionals and may exist
even when the full system is unobservable. Their existence, functional
observability and detectability, reduced-order construction, and
assignable error dynamics are well studied
\cite{Luenberger:66,MD:00,Fernando-Trinh-Jennings:10,
Rotella-Zambettakis:16,KV:21,MD-TF:23,MD-TF:25}.
In this literature, however, the functional is fixed, whereas in an
HOCBF safety filter it depends on the HOCBF gains. Gain selection
therefore affects both the barrier condition and whether the required
functional is observable or detectable, motivating gains that ensure
estimability with fast convergence while maintaining a large
admissible set.

We study this problem for linear systems with affine state constraints.
Recent work has characterized feasibility, compatibility, explicit
implementation, and closed-loop dynamics of safety filters in this
setting
\cite{Mousavi-Tan-Ames:25,Mousavi-Cohen-Mestres-Ames:26,
Mestres-Mousavi-Ames:26}, but output-feedback design remains largely
unexplored. HOCBF theory for general control-affine systems has also
studied feasibility and conservatism
\cite{Xiao-Belta:19,Zhang-Zhai-Xiong-Xia:24}, but not output feedback
from a functional-observability perspective or how the HOCBF gains
affect observer existence, order, and convergence. This coupling
between HOCBF design and functional-observer dynamics therefore
remains uncharacterized.

%This work bridges this gap for LTI systems with affine safety constraints. First, we construct an output-feedback safety filter that estimates only the scalar functionals required by the HOCBF constraints and uses a certified one-sided error bound to guarantee safety. Second, we co-design the HOCBF gains and functional observers, characterizing HOCBF observability and detectability for fixed gains, when these properties are achievable by gain selection, and the resulting observer-order and convergence-rate limits, even when the full state is unobservable. Third, after selecting the gain values and observer, we exploit the remaining gain-ordering freedom to enlarge the HOCBF admissible set without changing the final HOCBF inequality or observer design.

We bridge this gap for LTI systems with affine safety constraints.
First, we construct an output-feedback safety filter that estimates only
the scalar functionals required by the HOCBF constraints and uses a
certified one-sided error bound to guarantee safety. Second, we co-design
the HOCBF gains and functional observers, characterizing fixed-gain
HOCBF observability and detectability, their achievability by gain
selection, and the resulting observer-order and convergence-rate limits,
even when the full state is unobservable. Third, after selecting the
gains and observer, we exploit gain-ordering freedom to enlarge the
HOCBF admissible set without changing the final HOCBF inequality or
observer design.

\section{Problem Formulation}
\label{sec:problem-formulation}

This section formulates the output-feedback safety problem and
introduces the functional-observer framework.

\emph{Notation:}
Let $\mathbb R$ and $\mathbb R_{\geq0}$ denote the reals and nonnegative
reals, and $I_k$ the $k\times k$ identity. For a matrix $M$,
$\ker(M)$, $\operatorname{row}(M)$, and $M_{i,:}$ denote its kernel,
row space, and $i$th row; for a subspace $\mathcal V$,
$\mathcal V^\perp$ denotes its orthogonal complement. For polynomials
$p,q$, $p\mid q$, $\gcd(p,q)$, and $\deg(p)$ denote divisibility,
the monic greatest common divisor, and degree (counting multiplicities). The Euclidean norm is $\|\cdot\|$; vector inequalities are componentwise.

\subsection{High-Order Affine Control Barrier Functions }
\label{subsec:hocbf-filter}

Consider the linear time-invariant system
\begin{equation}
    \dot x=Ax+Bu,
    \qquad
    y=Cx,
    \label{eq:system}
\end{equation}
where $x\in\mathbb R^n$, $u\in\mathbb R^m$, and
$y\in\mathbb R^q$. Only the output $y$ is available to the
controller.

Safety is specified by the affine constraints
\begin{equation*}
    h_i(x):=f_i^\top x+d_i\geq0,
    \qquad i=1,\ldots,p,
    %\label{eq:affine-constraints}
\end{equation*}
with safe set
\(
    \mathcal S
    :=
    \bigcap_{i=1}^{p}
    \{x\in\mathbb R^n:h_i(x)\geq0\}.
\)
Assume that $h_i$ has relative degree $r_i$, that is,
\[
    f_i^\top A^kB=0,
    \quad k=0,\ldots,r_i-2,
    \qquad
    f_i^\top A^{r_i-1}B\neq0,
\]
where the first condition is omitted when $r_i=1$.

For each constraint, define the ordered gain vector
\[
    \boldsymbol{\alpha}_i
    :=
    (\alpha_{i1},\ldots,\alpha_{ir_i}) \in \mathbb R^{r_i},
    \qquad
    \alpha_{ij}>0,
\]
and the associated HOCBF polynomial
\begin{equation*}
    \phi_i(s)
    :=
    \prod_{j=1}^{r_i}(s+\alpha_{ij}).
    %\label{eq:hocbf-polynomial}
\end{equation*}
The polynomial $\phi_i$ depends only on the selected gain values,
whereas $\boldsymbol{\alpha}_i$ also specifies their order in the
HOCBF recursion
\begin{equation}
    \psi_{i,0}:=h_i,
    \qquad
    \psi_{i,j}
    :=
    \dot\psi_{i,j-1}
    +\alpha_{ij}\psi_{i,j-1},
    \quad j=1,\ldots,r_i,
    \label{eq:hocbf-recursion}
\end{equation}
where the derivatives are evaluated along \eqref{eq:system}. The
functions up to order $r_i-1$ define the HOCBF admissible sets
\begin{equation}
    \mathcal C_{i,\boldsymbol{\alpha}_i}
        :=
        \bigcap_{j=0}^{r_i-1}
        \{x:\psi_{i,j}(x)\geq0\},
    \qquad
    \mathcal C_{\boldsymbol{\alpha}}
        :=
        \bigcap_{i=1}^{p}
        \mathcal C_{i,\boldsymbol{\alpha}_i},
    \label{eq:hocbf-set}
\end{equation}
where
$\boldsymbol{\alpha}
:=(\boldsymbol{\alpha}_1,\ldots,\boldsymbol{\alpha}_p)$.
Since $\psi_{i,0}=h_i$, one has
$\mathcal C_{\boldsymbol{\alpha}}\subseteq\mathcal S$.

For system \eqref{eq:system}, the  HOCBF conditions
$\psi_{i,r_i}\geq0$ yield
\begin{equation}
    L_\phi x+Du+b_\phi\geq0,
    \label{eq:hocbf-inequality}
\end{equation}
where $L_\phi\in\mathbb R^{p\times n}$,
$D\in\mathbb R^{p\times m}$, and $b_\phi\in\mathbb R^p$ are defined
rowwise by
\begin{equation*}
\begin{aligned}
    [L_\phi]_{i,:}
        &:=f_i^\top\phi_i(A),&
    [D]_{i,:}
        &:=f_i^\top A^{r_i-1}B,&
    [b_\phi]_i
        &:=d_i\phi_i(0).
\end{aligned}
%\label{eq:hocbf-matrices}
\end{equation*}
By \cite{Xiao-Belta:19}, any locally Lipschitz input satisfying
\eqref{eq:hocbf-inequality} renders
$\mathcal C_{\boldsymbol{\alpha}}$ forward invariant.

For each $i=1,\ldots,p$, define the scalar HOCBF functional
\begin{equation*}
    \ell_i:=f_i^\top\phi_i(A),
    \qquad
    z_i:=\ell_i x,
    %\label{eq:scalar-hocbf-functional}
\end{equation*}
and collect these quantities as
\[
    z:=(z_1,\ldots,z_p)^\top=L_\phi x.
\]
Hence, \eqref{eq:hocbf-inequality} becomes
\(
    z+Du+b_\phi\geq0,
\)
so the HOCBF constraints require only $z$, rather than the full state $x$.

For a nominal output-feedback controller $u_{\rm nom}(y,t)$, consider
the safety filter
\begin{equation}
\begin{aligned}
    u_{\rm sf}(y,z,t)
    :=\arg\min_{u\in\mathbb R^m}\quad&
        \frac12\|u-u_{\rm nom}(y,t)\|^2\\
    \mathrm{s.t.}\quad&
        z+Du+b_\phi\geq0.
\end{aligned}
\label{eq:ideal-safety-filter}
\end{equation}
When feasible, the strictly convex objective gives a unique minimizer.
Feasibility is assumed throughout; full row rank of $D$ is sufficient
for arbitrary right-hand sides.

The gain values $\boldsymbol{\alpha}$ and their ordering play different roles.
The polynomial $\phi_i$ determines $\ell_i$ and $[b_\phi]_i$, whereas $D$ is gain-independent. Permuting the same gain values $\boldsymbol{\alpha}_i$ therefore leaves the final HOCBF inequality and \eqref{eq:ideal-safety-filter} unchanged, but may change $\mathcal C_{i,\boldsymbol{\alpha}_i}$.

\subsection{Functional-Observer Preliminaries}
\label{subsec:functional-observer-preliminaries}

This subsection summarizes the functional-observer preliminaries
needed for the co-design analysis.

For $M\in\mathbb R^{s\times n}$ and $k\geq0$, define
\[
    \mathcal O_k(M,A)
    :=
    \begin{bmatrix}
        M^\top & (MA)^\top & \cdots & (MA^k)^\top
    \end{bmatrix}^{\!\top}.
\]
The corresponding unobservable subspace is
$\mathcal U(M):=\ker\mathcal O_{n-1}(M,A)$.
In particular, $\mathcal U(C)$ is the output-unobservable subspace:
two initial states driven by the same input generate the same output
trajectory if and only if their initial-state difference lies in
$\mathcal U(C)$.

\begin{definition}\longthmtitle{Functional observability and detectability}
\label{def:functional-properties}
A scalar functional $z=\bar\ell x$ is \emph{functionally observable}
from $y=Cx$ if any two initial states that produce the same output
trajectory under the same input also produce the same $z$ trajectory.
It is \emph{functionally detectable} if the difference between their
$z$ trajectories converges to zero
\cite{MD-TF:23,MD-TF:25}.
\end{definition}

For system \eqref{eq:system}, equivalently,
\begin{align*}
    z\text{ functionally observable}
        &\iff
        \mathcal U(C)\subseteq\mathcal U(\bar\ell),\\
    z\text{ functionally detectable}
        &\iff
        \lim_{t\to\infty}\bar\ell e^{At}v=0,
        \; \forall v\in\mathcal U(C).
\end{align*}
Functional observability implies functional detectability, but not
conversely. Both properties may hold even when $(A,C)$ is not
observable.

We now specialize these notions to HOCBFs, distinguishing fixed gains
from gain selection.

\begin{definition}\longthmtitle{HOCBF observability and detectability for fixed gains}
\label{def:fixed-hocbf-observability}
For a fixed gain vector
$\boldsymbol{\alpha}_i=(\alpha_{i1},\ldots,\alpha_{ir_i})$ with
$\alpha_{ij}>0$, the pair $(h_i,\boldsymbol{\alpha}_i)$ is
\emph{HOCBF-observable} (resp., \emph{HOCBF-detectable}) from
$y$ if the corresponding HOCBF functional
\(
    z_i=\ell_i x=f_i^\top\phi_i(A)x
\)
is functionally observable (resp., functionally detectable)
from $y$.
\end{definition}

\begin{definition}\longthmtitle{HOCBF-observable and detectable constraints}
\label{def:hocbf-observable-constraint}
The constraint $h_i$ is \emph{HOCBF-observable} (resp.,
\emph{HOCBF-detectable}) from $y$ if there exists a positive gain
vector $\boldsymbol{\alpha}_i$ such that
$(h_i,\boldsymbol{\alpha}_i)$ is HOCBF-observable (resp.,
HOCBF-detectable) from $y$.
\end{definition}

We also define
\begin{equation}
    \kappa_C
    :=
    \min\left\{
        k\geq0:
        \operatorname{row}\mathcal O_k(C,A)
        =
        \mathcal U(C)^\perp
    \right\}.
    \label{eq:output-saturation-order}
\end{equation}
Hence, $\kappa_C$ is the smallest $k$ for which
$\mathcal O_k(C,A)$ captures all state information available from the
output trajectory.

To estimate $z_i$ directly, consider the scalar functional observer
\cite{MD:00,MD-TF:23,MD-TF:25}
\begin{equation}
\begin{aligned}
    \dot\eta_i &=N_i\eta_i+J_i y+G_i u,\\
    \hat z_i &=R_i\eta_i+E_i y,
\end{aligned}
\label{eq:scalar-functional-observer}
\end{equation}
where $\eta_i\in\mathbb R^{\nu_i}$ and $\nu_i$ is the observer order.
We consider standard realizations with autonomous linear error dynamics,
for which there exists $T_i\in\mathbb R^{\nu_i\times n}$ satisfying
\begin{equation}
    T_iA=N_iT_i+J_iC,\quad
    G_i=T_iB,\quad
    \ell_i=R_iT_i+E_iC.
    \label{eq:observer-embedding}
\end{equation}
%Given $\nu_i$ and $N_i$, the remaining matrices are obtained by solving \eqref{eq:observer-embedding}; see \cite{MD:00,Rotella-Zambettakis:16,KV:21}.
Given $\nu_i$, one may choose an observable canonical realization
$(N_i,R_i)$, e.g., with $R_i=[\,1\ 0\ \cdots\ 0\,]$, and obtain the
remaining matrices from \eqref{eq:observer-embedding}; see
\cite{MD:00,Rotella-Zambettakis:16,KV:21}.

\begin{comment}
To estimate $z_i$ directly, consider the scalar functional observer
\cite{MD:00,MD-TF:23,MD-TF:25}
\begin{equation}
\begin{aligned}
    \dot\eta_i &=N_i\eta_i+J_i y+G_i u,\\
    \hat z_i &=R_i\eta_i+E_i y,
\end{aligned}
\label{eq:scalar-functional-observer}
\end{equation}
where $\eta_i\in\mathbb R^{\nu_i}$ and $\nu_i$ is the observer order.
The observer matrices can be obtained using standard
functional-observer synthesis methods \cite{MD:00,KV:21}.

We consider functional observers whose estimation errors admit
autonomous linear dynamics. Specifically, suppose there exists
$T_i\in\mathbb R^{\nu_i\times n}$ satisfying
\begin{equation}
    T_iA=N_iT_i+J_iC,
    \quad
    G_i=T_iB,
    \quad
    \ell_i=R_iT_i+E_iC,
    \label{eq:observer-embedding}
\end{equation}
as in standard functional-observer constructions
\cite{Rotella-Zambettakis:16}.
\end{comment}

Define the observer error state and estimation error as
\[
    \varepsilon_i:=T_ix-\eta_i,
    \qquad
    e_i:=z_i-\hat z_i.
\]
Using the system and observer dynamics,
\(
    \dot\varepsilon_i
    =
    N_i\varepsilon_i
    +(T_iA-N_iT_i-J_iC)x
    +(T_iB-G_i)u.
\)
Hence, \eqref{eq:observer-embedding} gives
\begin{equation}
    \dot\varepsilon_i=N_i\varepsilon_i,
    \qquad
    e_i=R_i\varepsilon_i.
    \label{eq:scalar-observer-error}
\end{equation}
Because $u$ is available to the observer, its contribution can be
exactly reproduced by choosing $G_i=T_iB$; thus, this condition places no additional restriction on $B$.

We take the error realization to be minimal and define
\begin{equation}\label{eq:beta}
    \beta_i(s):=\det(sI_{\nu_i}-N_i),
\end{equation}
with $\beta_i=1$ when $\nu_i=0$.

The following existence condition is standard
\cite{KV:21}.

\begin{proposition}\longthmtitle{Functional-observer existence}
\label{prop:functional-observer-existence}
Fix $\nu_i$ and a real monic polynomial $\beta_i$ of degree $\nu_i$.
An observer satisfying \eqref{eq:observer-embedding} with error
polynomial $\beta_i$ exists if and only if
\[
    \ell_i\beta_i(A)
    \in
    \operatorname{row}\mathcal O_{\nu_i}(C,A).
\]
For the minimal realization \eqref{eq:scalar-observer-error},
$e_i(t)\to0$ for all initial errors if and only if $\beta_i$ is
Hurwitz.
\end{proposition}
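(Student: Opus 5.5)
The plan is to prove the two implications of the existence condition separately, with the Cayley--Hamilton theorem applied to $N_i$ as the central tool. I would then handle the convergence claim using minimality of the error realization.

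\emph{Necessity.} Assume \eqref{eq:observer-embedding} holds with $\det(sI-N_i)=\beta_i(s)=s^{\nu_i}+\sum_{k<\nu_i}c_ks^k$. Iterating $T_iA=N_iT_i+J_iC$ gives, for every $k\ge0$,
\[
T_iA^k=N_i^kT_i+\sum_{j=0}^{k-1}N_i^{k-1-j}J_iCA^j .
\]
Taking the combination with coefficients of $\beta_i$ yields
\[
T_i\beta_i(A)=\beta_i(N_i)T_i+\sum_{j=0}^{\nu_i-1}M_jCA^j,
\]
where the $M_j$ are explicit matrices. By Cayley--Hamilton, $\beta_i(N_i)=0$. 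Multiplying $\ell_i=R_iT_i+E_iC$ on the right by $\beta_i(A)$ then gives
\[
\ell_i\beta_i(A)=\sum_{j<\nu_i}R_iM_jCA^j+E_iC\beta_i(A).
\]
Each term lies in $\operatorname{row}\mathcal O_{\nu_i}(C,A)$, because $C\beta_i(A)$ involves $CA^j$ only for $j\le\nu_i$.

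\emph{Sufficiency.} Suppose $\ell_i\beta_i(A)=\sum_{j=0}^{\nu_i}w_jCA^j$. I will construct the observer explicitly. Take $N_i$ to be the observable companion matrix of $\beta_i$ with $R_i=[1\ 0\ \cdots\ 0]$; this choice is consistent with the canonical realization mentioned before the proposition. The rows of $T_i$ are defined by a backward recursion: $t_1:=\ell_i-E_iC$, and $t_{k+1}:=t_kA+c_{\nu_i-k}t_1-J_{i,k}C$ for $k=1,\ldots,\nu_i-1$. These are exactly the identities $T_iA=N_iT_i+J_iC$ for the first $\nu_i-1$ rows of the companion structure. The last row requires
\[
t_{\nu_i}A+c_0t_1-J_{i,\nu_i}C=0 .
\]
Unrolling the recursion, this becomes $(\ell_i-E_iC)\beta_i(A)=\sum_{k}J_{i,k}'CA^{j}$ for some rearranged coefficients. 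I would choose $E_i:=w_{\nu_i}$ so that the $CA^{\nu_i}$ term cancels. The remaining terms $w_jCA^j$ with $j<\nu_i$ are then matched by triangular back-substitution for $J_{i,1},\ldots,J_{i,\nu_i}$. Finally, set $G_i=T_iB$.

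\emph{Expected obstacle.} The main difficulty is bookkeeping: getting the companion-form indexing and the triangular solve right so that exactly the available degrees of freedom $E_i$ and $J_{i,\cdot}$ absorb the coefficients $w_0,\ldots,w_{\nu_i}$. If this becomes unwieldy, I would instead verify the existence of an equivalent realization through transfer functions.

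\emph{Convergence.} By \eqref{eq:scalar-observer-error}, $e_i(t)=R_ie^{N_it}\varepsilon_i(0)$. If $\beta_i$ is Hurwitz, then $e^{N_it}\to0$ and hence $e_i\to0$. Conversely, minimality gives observability of $(N_i,R_i)$. For any eigenvalue $\lambda$ of $N_i$ with eigenvector $v$, observability forces $R_iv\ne0$. Choosing the initial error along $v$ (or along its real and imaginary parts) produces an error $R_ie^{\lambda t}v$ that does not vanish unless $\operatorname{Re}\lambda<0$. Hence $\beta_i$ must be Hurwitz.
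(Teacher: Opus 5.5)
The paper gives no proof of this proposition. It cites it as standard from \cite{KV:21}, noting only beforehand that one may take an observable canonical pair $(N_i,R_i)$ with $R_i=[1\ 0\ \cdots\ 0]$ and solve \eqref{eq:observer-embedding} for the rest. Your proposal is a correct self-contained proof along exactly those lines, so it fills in what the paper leaves to the literature rather than departing from it.

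\begin{itemize}
\item \emph{Necessity.} Iterating $T_iA=N_iT_i+J_iC$ and applying Cayley--Hamilton is right. This direction does not use minimality.
\item \emph{Convergence.} The PBH argument is right, and it is the one place where observability of $(N_i,R_i)$ is essential. For a complex eigenvalue, the real initial errors $\operatorname{Re}v$ and $\operatorname{Im}v$ suffice: if both gave vanishing errors, so would $v$, contradicting $|R_ie^{N_it}v|=e^{\operatorname{Re}\lambda\,t}|R_iv|\geq|R_iv|>0$.
\item \emph{Sufficiency bookkeeping.} It is simpler than you expect. Unrolling your recursion gives exactly $t_1\beta_i(A)=\sum_{k=1}^{\nu_i}J_{i,k}CA^{\nu_i-k}$. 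After setting $E_i=w_{\nu_i}$, the leftover coefficients are $w_j-c_jw_{\nu_i}$, not $w_j$, because $E_iC\beta_i(A)$ also contributes lower powers. These are matched directly by $J_{i,\nu_i-j}=w_j-c_jw_{\nu_i}$, with no triangular solve needed.
\item \emph{Minimality of the construction.} You should say explicitly that the companion pair $(N_i,R_i)$ is observable. That makes the constructed error realization minimal, so its error polynomial is exactly $\beta_i$, as the definition \eqref{eq:beta} requires.
\end{itemize}
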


\subsection{Output-Feedback Safety Design Problem}
\label{subsec:functional-estimation}

The filter \eqref{eq:ideal-safety-filter} requires the HOCBF
functionals $z_i$. Since $\ell_i=f_i^\top\phi_i(A)$, these functionals
depend on the HOCBF gains, coupling HOCBF and observer design.

\begin{problem}
\label{prob:main}
For system \eqref{eq:system}, implement
\eqref{eq:ideal-safety-filter} from output measurements using the
functional observers \eqref{eq:scalar-functional-observer} while
guaranteeing safety. Characterize HOCBF observability and detectability, observer
existence, order, and rate under gain selection, and determine the
ordering maximizing $\mathcal C_{\boldsymbol{\alpha}}$. 
\end{problem}

\section{Safe Functional-Observer Filtering}
\label{sec:safe-filtering}

This section derives a certified error bound for output-feedback HOCBF safety filtering.

Let
\[
    \hat z:=(\hat z_1,\ldots,\hat z_p)^\top,
    \qquad
    e:=z-\hat z.
\]
If $e_i<0$, then $\hat z_i$ overestimates the true HOCBF functional
$z_i$ and may compromise safety, whereas $e_i>0$ is conservative.
Hence, only a lower error bound is needed. Suppose a known function
$\rho:\mathbb R_{\geq0}\rightarrow\mathbb R_{\geq0}^{p}$ satisfies
\begin{equation}
    e(t)\geq-\rho(t),
    \qquad t\geq0.
    \label{eq:error-lower-bound}
\end{equation}

Using this bound, replace the safety filter
\eqref{eq:ideal-safety-filter} by the implementable output-feedback
filter
\begin{equation}
\begin{aligned}
    \widetilde u(y,\hat z,t)
    :=\arg\min_{u\in\mathbb R^m}\quad&
        \frac12\|u-u_{\rm nom}(y,t)\|^2\\
    \mathrm{s.t.}\quad&
        \hat z+Du+b_\phi\geq\rho(t).
\end{aligned}
\label{eq:output-feedback-filter}
\end{equation}
Under the feasibility assumption of Section~\ref{subsec:hocbf-filter}, the strictly convex objective gives a unique minimizer. The next result shows that the tightening by $\rho$ preserves safety.

\begin{proposition}\longthmtitle{Safety under a one-sided error bound}
\label{prop:one-sided-safety}
Suppose that \eqref{eq:error-lower-bound} holds and the solution of
\eqref{eq:output-feedback-filter} is locally Lipschitz. Then
$\mathcal C_{\boldsymbol{\alpha}}$ is forward invariant.
\end{proposition}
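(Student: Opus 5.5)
The plan is to reduce the statement to the full-state result of \cite{Xiao-Belta:19} recalled in Section~\ref{subsec:hocbf-filter}. That result says any locally Lipschitz input satisfying \eqref{eq:hocbf-inequality} renders $\mathcal C_{\boldsymbol{\alpha}}$ forward invariant. So it suffices to show that the implementable input $\widetilde u(t):=\widetilde u(y(t),\hat z(t),t)$, applied along closed-loop trajectories, satisfies the true HOCBF inequality $z(t)+D\widetilde u(t)+b_\phi\geq0$ for all $t\geq0$.

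\textbf{Step 1 (true inequality).} Fix $t\geq0$. Feasibility of \eqref{eq:output-feedback-filter} gives
\[
\hat z(t)+D\widetilde u(t)+b_\phi\geq\rho(t)
\]
componentwise. Writing $z=\hat z+e$ and using \eqref{eq:error-lower-bound}, we get
\[
z(t)+D\widetilde u(t)+b_\phi=\hat z(t)+D\widetilde u(t)+b_\phi+e(t)\geq\rho(t)-\rho(t)=0.
\]
Hence each row gives $\psi_{i,r_i}(x(t),\widetilde u(t))\geq0$. This step is only a componentwise inequality manipulation and needs no structure beyond $z=L_\phi x$.

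\textbf{Step 2 (invariance via the recursion).} Rather than citing \cite{Xiao-Belta:19} as a black box, I would also check the invariance directly, since the input is now time-varying through $\eta$ and $\rho$. Let $x(0)\in\mathcal C_{\boldsymbol{\alpha}}$. By \eqref{eq:hocbf-recursion}, $\dot\psi_{i,r_i-1}=-\alpha_{ir_i}\psi_{i,r_i-1}+\psi_{i,r_i}\geq-\alpha_{ir_i}\psi_{i,r_i-1}$. The comparison lemma then gives $\psi_{i,r_i-1}(t)\geq e^{-\alpha_{ir_i}t}\psi_{i,r_i-1}(0)\geq0$. Inducting downward in $j$ via $\dot\psi_{i,j-1}=-\alpha_{ij}\psi_{i,j-1}+\psi_{i,j}$ with $\psi_{i,j}\geq0$, each $\psi_{i,j}(t)\geq0$ for $j=0,\ldots,r_i-1$, i.e., $x(t)\in\mathcal C_{i,\boldsymbol{\alpha}_i}$ for every $i$. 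The comparison argument requires $\psi_{i,j}$ to be absolutely continuous along the trajectory. This holds because $x$ is $C^1$ and, by the relative-degree assumption, $\psi_{i,j}$ for $j\leq r_i-1$ is a linear function of $x$ alone.

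\textbf{Main obstacle (well-posedness).} The expected difficulty is ensuring that the closed-loop solution exists and is unique, so that Steps 1 and 2 apply on $[0,\infty)$. The closed loop is the augmented system in $(x,\eta_1,\ldots,\eta_p)$, driven by $\widetilde u$ and the explicit time dependence of $\rho$. The hypothesis that the filter solution is locally Lipschitz provides local existence and uniqueness. Since the invariance argument only uses the inequality pointwise on the maximal interval of existence, the conclusion holds on that interval. Forward invariance is interpreted there, as in \cite{Xiao-Belta:19}. I would remark that $\rho$ should be at least piecewise continuous for the Lipschitz assumption to be meaningful.
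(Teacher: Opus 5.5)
Your proposal is correct and matches the paper's proof: Step~1 is exactly the paper's argument ($z+D\widetilde u+b_\phi=\hat z+D\widetilde u+b_\phi+e\geq\rho-\rho=0$), after which the paper simply invokes \cite{Xiao-Belta:19}. Your Step~2 unpacks that citation with the comparison lemma applied downward along \eqref{eq:hocbf-recursion}. This is a worthwhile check, because $\widetilde u$ is a dynamic output-feedback input rather than a state feedback, but it is not a different route.
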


\begin{proof}
Since $e=z-\hat z$, the constraint in
\eqref{eq:output-feedback-filter} and
\eqref{eq:error-lower-bound} give
\[
    z+Du+b_\phi
    =
    \hat z+Du+b_\phi+e
    \geq
    \rho-\rho
    =0.
\]
Thus, $\widetilde u$ satisfies the true HOCBF inequality
\eqref{eq:hocbf-inequality}. Forward invariance follows from the
standard HOCBF result \cite{Xiao-Belta:19}.
\end{proof}

Under \eqref{eq:error-lower-bound}, $\rho_i(t)$ is minimal: any smaller
margin could allow $e_i(t)=-\rho_i(t)$ to violate the true HOCBF
inequality.

We next construct such a bound for the autonomous error dynamics in
Section~\ref{subsec:functional-observer-preliminaries}. Define
\(
    \chi
    :=
    \begin{bmatrix}
        \varepsilon_1^\top & \cdots & \varepsilon_p^\top
    \end{bmatrix}^{\!\top},
    \) 
    \(A_e:=\operatorname{blkdiag}(N_1,\ldots,N_p),
\)
and
\(
    C_e:=\operatorname{blkdiag}(R_1,\ldots,R_p).
\)
Then
\begin{equation}
    \dot\chi=A_e\chi,
    \qquad
    e=C_e\chi.
    \label{eq:error-realization}
\end{equation} 
Assume
\begin{equation}
    \chi(0)^\top P\chi(0)\leq\bar V_0,
    \label{eq:initial-error-bound}
\end{equation}
where $P\succ0$ and $\bar V_0\geq0$ are known.

For each $i=1,\ldots,p$, define
\begin{equation}
    q_i(t):=[C_e]_{i,:}e^{A_et},
    \quad
    \rho_i(t):=
    \sqrt{\bar V_0\,q_i(t)P^{-1}q_i(t)^\top},
    \label{eq:componentwise-error-margin}
\end{equation}
and let
\(
    \rho:=(\rho_1,\ldots,\rho_p)^\top.
\)
The following result certifies the required error bound.

\begin{comment}\begin{proposition}[Componentwise estimation-error bound]
\label{prop:componentwise-error-bound}
Under \eqref{eq:error-realization} and
\eqref{eq:initial-error-bound}, $\rho$ satisfies
\eqref{eq:error-lower-bound}. If $A_e$ is Hurwitz, then
$\rho(t)$ converges to zero exponentially.
\end{proposition}
\end{comment}
\begin{proposition}\longthmtitle{Componentwise estimation-error bound}
\label{prop:componentwise-error-bound}
Under \eqref{eq:error-realization} and
\eqref{eq:initial-error-bound}, $\rho$ satisfies
\eqref{eq:error-lower-bound}. If $A_e$ is Hurwitz, then
$\rho(t)\to0$ exponentially. Moreover, if $D$ has full row rank,
\[
    \widetilde u(y,\hat z,t)-u_{\rm sf}(y,z,t)\to0.
\]
\end{proposition}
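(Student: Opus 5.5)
The plan has three parts, taken in order: the error bound, its exponential decay, and convergence of the filtered input.

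\emph{Error bound.} I will bound each component of the error with a weighted Cauchy--Schwarz inequality. From \eqref{eq:error-realization} the error is $e(t)=C_ee^{A_et}\chi(0)$, so $e_i(t)=q_i(t)\chi(0)$. Writing
\[
    q_i\chi(0)=\bigl(P^{-1/2}q_i^\top\bigr)^{\!\top}\bigl(P^{1/2}\chi(0)\bigr)
\]
gives
\[
    |e_i(t)|\le\sqrt{q_i(t)P^{-1}q_i(t)^\top}\sqrt{\chi(0)^\top P\chi(0)}\le\rho_i(t)
\]
by \eqref{eq:initial-error-bound}. In particular $e_i(t)\ge-\rho_i(t)$, which is \eqref{eq:error-lower-bound}.

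\emph{Exponential decay of $\rho$.} If $A_e$ is Hurwitz, there exist $c,\lambda>0$ with $\|e^{A_et}\|\le ce^{-\lambda t}$. Then
\[
    \rho_i(t)\le\sqrt{\bar V_0\|P^{-1}\|}\,\|[C_e]_{i,:}\|\,ce^{-\lambda t},
\]
which is exponential decay. The same estimate shows $e(t)\to0$ exponentially.

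\emph{Convergence of $\widetilde u-u_{\rm sf}$.} This step needs the most care. I will treat both filters as instances of one parametric QP,
\[
    \pi(w,v):=\arg\min_u\tfrac12\|u-w\|^2\quad\text{s.t. } Du+v\ge0.
\]
Then $u_{\rm sf}=\pi(u_{\rm nom},z+b_\phi)$ and $\widetilde u=\pi(u_{\rm nom},\hat z+b_\phi-\rho)$. The two right-hand sides differ by $\delta(t):=-e(t)-\rho(t)$, and $\|\delta(t)\|\to0$ by the previous two steps.

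The key fact I will invoke is a Hoffman-type (Lipschitz) bound on the projection onto a polyhedron as its right-hand side varies: for fixed $D$ and fixed $w$,
\[
    \|\pi(w,v)-\pi(w,v')\|\le L_D\|v-v'\|,
\]
where $L_D$ depends only on $D$. This holds because $\pi(w,\cdot)$ is the Euclidean projection of $w$ onto $\{u:Du\ge-v\}$, and by Hoffman's lemma the Hausdorff distance between two such feasible sets is at most $L_D\|v-v'\|$. Full row rank of $D$ makes both sets nonempty for every right-hand side.

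The projection of a fixed point onto closed convex sets is only H\"older-$1/2$ continuous in the Hausdorff distance in general, so I will not rely on that route. Instead I will use the polyhedral structure directly. The QP solution map is piecewise affine in $v$, with one affine piece per active set. Since $D$ has full row rank, every $D_{\mathcal A}$ with $\mathcal A$ an active index set has full row rank. On each piece,
\[
    u=w-D_{\mathcal A}^\top(D_{\mathcal A}D_{\mathcal A}^\top)^{-1}(D_{\mathcal A}w+v_{\mathcal A}),
\]
which is affine in $v$ with a $w$-independent coefficient $D_{\mathcal A}^\top(D_{\mathcal A}D_{\mathcal A}^\top)^{-1}$. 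There are finitely many active sets and $\pi(w,\cdot)$ is continuous, so it is globally Lipschitz in $v$ with constant
\[
    L_D=\max_{\mathcal A}\bigl\|D_{\mathcal A}^\top(D_{\mathcal A}D_{\mathcal A}^\top)^{-1}\bigr\|,
\]
uniformly in $w$. Applying this bound with $w=u_{\rm nom}(y,t)$, $v=\hat z+b_\phi-\rho$, and $v'=z+b_\phi$ gives
\[
    \|\widetilde u-u_{\rm sf}\|\le L_D\bigl(\|e(t)\|+\|\rho(t)\|\bigr)\to0.
\]
The uniformity in $w$ matters because $u_{\rm nom}(y,t)$ is time-varying and possibly unbounded.
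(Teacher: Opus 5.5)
Your proof is correct. Its first two steps are exactly the paper's: the weighted Cauchy--Schwarz bound $|e_i(t)|\le\rho_i(t)$, and the exponential decay of $\rho$ when $A_e$ is Hurwitz.

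The difference is in the last step. The paper notes that $\hat z\to z$ and $\rho\to0$, and then simply invokes continuity of the unique QP solution under full row rank of $D$. You instead prove that $v\mapsto\pi(w,v)$ is globally Lipschitz with the $w$-independent constant $L_D=\max_{\mathcal A}\|D_{\mathcal A}^\top(D_{\mathcal A}D_{\mathcal A}^\top)^{-1}\|$, using its piecewise-affine active-set structure. This buys two things:
\begin{enumerate}
\item It supplies the uniformity that the paper's continuity appeal leaves implicit. The QP data $u_{\rm nom}(y(t),t)$ and $z(t)$ vary along the closed-loop trajectory and need not stay in a compact set, so pointwise continuity alone does not immediately give $\widetilde u-u_{\rm sf}\to0$.
\item It gives the quantitative estimate $\|\widetilde u-u_{\rm sf}\|\le L_D(\|e(t)\|+\|\rho(t)\|)$. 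Hence the input discrepancy decays exponentially at the rate of $\rho$.
\end{enumerate}
The paper's route buys only brevity.

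Two suggestions for the write-up:
\begin{enumerate}
\item Drop the opening attribution to Hoffman's lemma, which you then rightly set aside.
\item Take the pieces to be the closed polyhedra $\{v:\theta_{\mathcal A}(v)\ge0,\ D_{\mathcal A^c}u_{\mathcal A}(v)+v_{\mathcal A^c}\ge0\}$. Here $u_{\mathcal A}(v)$ is your affine formula and $\theta_{\mathcal A}(v):=-(D_{\mathcal A}D_{\mathcal A}^\top)^{-1}(D_{\mathcal A}w+v_{\mathcal A})$ is the corresponding multiplier. By KKT these polyhedra cover $\mathbb R^p$. Any segment from $v$ to $v'$ then splits into finitely many subintervals, on each of which $\pi(w,\cdot)$ is a single affine map. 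This yields $L_D$ directly, with no separate continuity argument needed.
\end{enumerate}
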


\begin{proof}
From \eqref{eq:error-realization},
\(
    e_i(t)=q_i(t)\chi(0).
\)
By weighted Cauchy--Schwarz and
\eqref{eq:initial-error-bound},
\(
    |e_i(t)|
    \leq
    \sqrt{\bar V_0\,q_i(t)P^{-1}q_i(t)^\top}
    =\rho_i(t).
\)
Hence, $e_i(t)\geq-\rho_i(t)$ for all $i$, proving
\eqref{eq:error-lower-bound}. For fixed $i,t$ with $q_i(t)\neq0$,
the lower bound is attained by an initial error on the boundary of
\eqref{eq:initial-error-bound} in the direction
$-P^{-1}q_i(t)^\top$, so it is pointwise tight.

If $A_e$ is Hurwitz, then $e(t)$ and $\rho(t)$ converge to zero
exponentially. Thus, $\hat z(t)\to z(t)$ and the constraint in
\eqref{eq:output-feedback-filter} converges to that in
\eqref{eq:ideal-safety-filter}. Under the full-row-rank assumption on
$D$, continuity of the unique QP solution then gives
\(
    \widetilde u(y,\hat z,t)-u_{\rm sf}(y,z,t)\to0 .
\)
\end{proof}

\section{HOCBF Gain--Observer Co-Design}
\label{sec:co-design}

This section co-designs HOCBF gains and functional observers through
unobservable-factor allocation, yielding observability, detectability,
order, and rate results.

\subsection{Unobservable-Factor Allocation}
\label{subsec:mode-allocation}

%This subsection identifies the output-unobservable dynamics relevant to each safety constraint and determines which factors can be removed by the HOCBF polynomial and which must appear in the observer-error polynomial.

This subsection identifies the relevant unobservable dynamics and their
allocation between the HOCBF and observer-error polynomials.

Consider two trajectories $x_1,x_2$ of \eqref{eq:system} with identical
input and output, and let $\delta x:=x_1-x_2\in\mathcal U(C)$. Let
$n_u:=\dim\mathcal U(C)$ and
$V_{\mathcal U}\in\mathbb R^{n\times n_u}$ span $\mathcal U(C)$.
Since $\mathcal U(C)$ is $A$-invariant, there exists
$A_{\mathcal U}\in\mathbb R^{n_u\times n_u}$ such that
%Consider two trajectories $x_1$ and $x_2$ of \eqref{eq:system} driven by the same input and producing the same output trajectory. Let
%\(
 %   \delta x:=x_1-x_2.
%\)
%Then $\delta x(t)\in\mathcal U(C)$. Let $n_u:=\dim\mathcal U(C)$ and let $V_{\mathcal U}\in\mathbb R^{n\times n_u}$ have columns forming a basis for $\mathcal U(C)$. Since $\mathcal U(C)$ is $A$-invariant, there exists $A_{\mathcal U}\in\mathbb R^{n_u\times n_u}$ with
\begin{equation}
    AV_{\mathcal U}=V_{\mathcal U}A_{\mathcal U}.
    \label{eq:unobservable-dynamics}
\end{equation}
Writing
\(
    \delta x=V_{\mathcal U}\xi
\)
gives
\(
    \dot\xi=A_{\mathcal U}\xi.
\)

For constraint $i$, define
\[
    \delta h_i
    :=
    h_i(x_1)-h_i(x_2)
    =
    f_i^\top V_{\mathcal U}\xi,
    \qquad
    c_i:=f_i^\top V_{\mathcal U}.
\]
Thus, $c_i$ determines which output-unobservable dynamics affect the
$i$th constraint.

We define the \emph{relevant unobservable polynomial} $\mu_i$ as
follows. If $c_i=0$, set $\mu_i=1$. Otherwise, let $\sigma_i$ be the
smallest positive integer such that
\[
    c_iA_{\mathcal U}^{\sigma_i}
    \in
    \operatorname{span}
    \{c_i,c_iA_{\mathcal U},\ldots,
      c_iA_{\mathcal U}^{\sigma_i-1}\}.
\]
Equivalently, for some $a_{i,0},\ldots,a_{i,\sigma_i-1}$,
\begin{equation}
    c_iA_{\mathcal U}^{\sigma_i}
    +
    \sum_{k=0}^{\sigma_i-1}
    a_{i,k}c_iA_{\mathcal U}^{k}
    =0.
    \label{eq:krylov-dependence}
\end{equation}
Define
\begin{equation}
    \mu_i(s)
    :=
    s^{\sigma_i}
    +
    \sum_{k=0}^{\sigma_i-1}a_{i,k}s^k.
    \label{eq:relevant-unobservable-polynomial}
\end{equation}
Thus, $\mu_i$ is obtained from the first linear dependence in the
Krylov sequence
$c_i,c_iA_{\mathcal U},c_iA_{\mathcal U}^2,\ldots$
\cite{Dumas-Pernet-Wan:05}. Equivalently, it is the monic polynomial
of smallest degree satisfying
\[
    f_i^\top V_{\mathcal U}\mu_i(A_{\mathcal U})=0.
\]
%Hence, $\mu_i$ captures exactly the output-unobservable dynamics relevant to the $i$th constraint. If $A_{\mathcal U}$ is diagonalizable, $\mu_i$ is the product of factors $s-\lambda$ associated with the distinct relevant unobservable modes; otherwise, multiplicities account for generalized-mode chains. Moreover, $\mu_i$ is independent of the basis $V_{\mathcal U}$.
%Hence, $\mu_i$ captures the output-unobservable dynamics relevant to constraint $i$. If $A_{\mathcal U}$ is diagonalizable, $\mu_i$ is the product of $s-\lambda$ over the distinct relevant modes; otherwise, multiplicities encode generalized-mode chains. It is independent of the basis $V_{\mathcal U}$.
Hence, $\mu_i$ captures the unobservable dynamics relevant to constraint
$i$. If $A_{\mathcal U}$ is diagonalizable, it is the product of
$s-\lambda$ over distinct relevant modes; otherwise, multiplicities
encode generalized chains. It is basis-independent.

Recall that $\ell_i=f_i^\top\phi_i(A)$. For the two trajectories,
\[
    \delta z_i
    :=
    z_i(x_1)-z_i(x_2)
    =
    f_i^\top V_{\mathcal U}\phi_i(A_{\mathcal U})\xi.
\]
Thus, common factors of $\mu_i$ and $\phi_i$, including
multiplicities, do not contribute to the output-unobservable part of
$z_i$. Define the remaining unobservable polynomial
\[
    \mu_i^{\rm rem}
    :=
    \frac{\mu_i}{\gcd(\mu_i,\phi_i)}.
\]

The following lemma characterizes $\mu_i$ and $\mu_i^{\rm rem}$.

\begin{lemma}\longthmtitle{Unobservable polynomial characterization}
\label{lem:relevant-unobservable-polynomial}
For any polynomials $p$ and $q$,
\[
    f_i^\top V_{\mathcal U}p(A_{\mathcal U})=0
    \quad\Longleftrightarrow\quad
    \mu_i\mid p,
\]
and
\begin{equation}
    f_i^\top V_{\mathcal U}
    \phi_i(A_{\mathcal U})q(A_{\mathcal U})=0
    \quad\Longleftrightarrow\quad
    \mu_i^{\rm rem}\mid q.
    \label{eq:residual-polynomial-property}
\end{equation}
\end{lemma}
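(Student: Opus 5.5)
The plan is to recognize $\mu_i$ as the minimal (annihilating) polynomial of the row vector $c_i=f_i^\top V_{\mathcal U}$ with respect to $A_{\mathcal U}$. Then both claims follow from the standard ideal argument in $\mathbb R[s]$.

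\emph{First equivalence.} Consider the set
\[
\mathcal I_i:=\{p\in\mathbb R[s]: c_i\,p(A_{\mathcal U})=0\}.
\]
It is an ideal: it is closed under addition, and if $c_ip(A_{\mathcal U})=0$ then $c_i(pg)(A_{\mathcal U})=c_ip(A_{\mathcal U})g(A_{\mathcal U})=0$ for any $g$. It is nonzero, since it contains the characteristic polynomial of $A_{\mathcal U}$ by Cayley--Hamilton. Hence $\mathcal I_i$ is generated by its unique monic element of least degree. I would check that this generator is $\mu_i$.

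If $c_i=0$, then $\mathcal I_i=\mathbb R[s]$ and the generator is $1=\mu_i$.

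Otherwise, \eqref{eq:krylov-dependence} shows $\mu_i\in\mathcal I_i$. Suppose some nonzero $p\in\mathcal I_i$ had degree $d<\sigma_i$. Then $c_ip(A_{\mathcal U})=0$ would be a nontrivial linear dependence among $c_i,\ldots,c_iA_{\mathcal U}^{d}$. Scale $p$ to be monic. The dependence expresses $c_iA_{\mathcal U}^{d}$ in the span of the earlier Krylov vectors, and $d\ge1$ because $c_i\ne0$. This contradicts the minimality of $\sigma_i$.

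So $\mu_i$ generates $\mathcal I_i$, which gives $c_ip(A_{\mathcal U})=0\iff \mu_i\mid p$. The one direction can also be seen directly: write $p=g\mu_i+r$ with $\deg r<\sigma_i$. Then $c_ir(A_{\mathcal U})=0$, which forces $r=0$ by the same minimality argument.

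\emph{Second equivalence.} Apply the first part with $p=\phi_iq$. This gives
\[
c_i\phi_i(A_{\mathcal U})q(A_{\mathcal U})=0\iff \mu_i\mid \phi_iq.
\]
Now let $g=\gcd(\mu_i,\phi_i)$, and write $\mu_i=g\mu_i^{\rm rem}$ and $\phi_i=g\phi_i'$, where $\gcd(\mu_i^{\rm rem},\phi_i')=1$. Cancelling $g$, the condition $\mu_i\mid\phi_iq$ becomes $\mu_i^{\rm rem}\mid \phi_i'q$. By coprimality and Euclid's lemma in the PID $\mathbb R[s]$, this is equivalent to $\mu_i^{\rm rem}\mid q$. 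That is \eqref{eq:residual-polynomial-property}.

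\emph{Main obstacle.} The only delicate step is pinning down that the Krylov-defined $\mu_i$ is exactly the monic generator of $\mathcal I_i$, in particular that no lower-degree annihilator exists. That is the minimality argument above. The $c_i=0$ case and the convention that $\gcd$ is monic should be handled explicitly. The latter ensures $\mu_i^{\rm rem}$ is monic and well defined.

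Basis-independence is not needed here. Still, one can note that changing $V_{\mathcal U}$ to $V_{\mathcal U}S$ sends $c_i\mapsto c_iS$ and $A_{\mathcal U}\mapsto S^{-1}A_{\mathcal U}S$, which leaves $\mathcal I_i$ unchanged.
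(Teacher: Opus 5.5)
Your proof is correct and follows essentially the same route as the paper: divide $p$ by $\mu_i$ and use the minimality of $\mu_i$ to force the remainder to vanish (you phrase this as $\mu_i$ generating the annihilator ideal of $c_i$), then apply the first equivalence to $p=\phi_i q$. You also make explicit two steps the paper leaves implicit: that the Krylov-defined $\mu_i$ is indeed the least-degree monic annihilator of $c_i$, and that cancelling $\gcd(\mu_i,\phi_i)$ and applying Euclid's lemma turns $\mu_i\mid\phi_i q$ into $\mu_i^{\rm rem}\mid q$.
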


\begin{proof}
For the first equivalence, $\mu_i\mid p$ implies the result by the
definition of $\mu_i$. Conversely, write
$p=q_1\mu_i+r$ with $\deg(r)<\deg(\mu_i)$. Then
$f_i^\top V_{\mathcal U}r(A_{\mathcal U})=0$, so minimality of
$\mu_i$ gives $r=0$.
For the second, apply the first equivalence to $p=\phi_i q$:
\[
    \mu_i\mid\phi_i q
    \quad\Longleftrightarrow\quad
    \mu_i^{\rm rem}\mid q.
\]
\end{proof}

Thus, $\mu_i^{\rm rem}$ captures the remaining unobservable dynamics in $z_i$.
We can now state the main polynomial co-design result.

\begin{theorem}\longthmtitle{Unobservable-factor allocation}
\label{thm:factor-allocation}
Fix $\boldsymbol{\alpha}_i$, and let $\beta_i$ be the observer-error
polynomial defined in \eqref{eq:beta}. Every scalar functional observer
satisfying \eqref{eq:observer-embedding} must satisfy
\begin{equation}
    \mu_i^{\rm rem}\mid\beta_i.
    \label{eq:factor-allocation-condition}
\end{equation}
Conversely, if $\nu_i\geq\kappa_C$, every real monic polynomial
$\beta_i$ of degree $\nu_i$ satisfying
\eqref{eq:factor-allocation-condition} is realizable by an
order-$\nu_i$ functional observer satisfying
\eqref{eq:observer-embedding}.
\end{theorem}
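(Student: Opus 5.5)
The plan is to reduce both directions to Proposition~\ref{prop:functional-observer-existence} and Lemma~\ref{lem:relevant-unobservable-polynomial}, using the fact that $\operatorname{row}\mathcal O_{\nu}(C,A)$ annihilates $\mathcal U(C)$.

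\emph{Necessity.} Take any observer satisfying \eqref{eq:observer-embedding}, with $\beta_i$ the characteristic polynomial of $N_i$.
\begin{itemize}
\item[(i)] Iterating $T_iA=N_iT_i+J_iC$ gives $T_iA^k=N_i^kT_i+\sum_{j<k}N_i^{k-1-j}J_iCA^j$. Hence
\[
T_i\beta_i(A)=\beta_i(N_i)T_i+(\text{terms in }\operatorname{row}\mathcal O_{\nu_i-1}(C,A)).
\]
\item[(ii)] Cayley--Hamilton gives $\beta_i(N_i)=0$.
\item[(iii)] Using $\ell_i=R_iT_i+E_iC$, it follows that $\ell_i\beta_i(A)\in\operatorname{row}\mathcal O_{\nu_i}(C,A)$. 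This is also the necessity half of Proposition~\ref{prop:functional-observer-existence}, which may be cited directly.
\item[(iv)] Every row of $\mathcal O_{\nu_i}(C,A)$ vanishes on $\mathcal U(C)$, so $\ell_i\beta_i(A)V_{\mathcal U}=0$.
\item[(v)] By \eqref{eq:unobservable-dynamics}, $\phi_i(A)\beta_i(A)V_{\mathcal U}=V_{\mathcal U}\phi_i(A_{\mathcal U})\beta_i(A_{\mathcal U})$. Therefore $f_i^\top V_{\mathcal U}\phi_i(A_{\mathcal U})\beta_i(A_{\mathcal U})=0$.
\item[(vi)] Equation \eqref{eq:residual-polynomial-property} with $q=\beta_i$ yields $\mu_i^{\rm rem}\mid\beta_i$.
\end{itemize}

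\emph{Sufficiency.} Fix $\nu_i\ge\kappa_C$ and a real monic $\beta_i$ of degree $\nu_i$ with $\mu_i^{\rm rem}\mid\beta_i$. By the converse of \eqref{eq:residual-polynomial-property} and the intertwining identity above, the row vector $w:=\ell_i\beta_i(A)$ satisfies $wV_{\mathcal U}=0$, so $w\in\mathcal U(C)^\perp$. By the definition \eqref{eq:output-saturation-order} of $\kappa_C$, and since the row spaces of $\mathcal O_k(C,A)$ are nested and increasing in $k$,
\[
\mathcal U(C)^\perp=\operatorname{row}\mathcal O_{\kappa_C}(C,A)\subseteq\operatorname{row}\mathcal O_{\nu_i}(C,A)\subseteq\mathcal U(C)^\perp ,
\]
so equality holds throughout. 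Thus $w\in\operatorname{row}\mathcal O_{\nu_i}(C,A)$, and Proposition~\ref{prop:functional-observer-existence} provides an order-$\nu_i$ observer with error polynomial $\beta_i$.

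\emph{Main obstacle.} The delicate step is matching the characteristic polynomial of $N_i$ with the polynomial in Proposition~\ref{prop:functional-observer-existence}, particularly when the realization is minimal. If $(N_i,R_i)$ is not observable, only the minimal polynomial of the observable part is forced; the observed-part polynomial would then need to be used in the necessity argument. Since the paper takes the error realization minimal and $\beta_i$ as its characteristic polynomial, I would rely on that convention. Specifically, an observable canonical pair $(N_i,R_i)$ with $R_i=[1\ 0\ \cdots\ 0]$ has characteristic polynomial equal to $\beta_i$. Degenerate cases ($\nu_i=0$, $c_i=0$) follow from the same formulas with $\beta_i=1$.
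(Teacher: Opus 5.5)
Your proposal is correct and follows the paper's proof essentially step for step. Necessity goes through $\ell_i\beta_i(A)\in\operatorname{row}\mathcal O_{\nu_i}(C,A)\subseteq\mathcal U(C)^\perp$, then the intertwining $AV_{\mathcal U}=V_{\mathcal U}A_{\mathcal U}$ and the residual-polynomial property; sufficiency uses the same equivalence, together with $\operatorname{row}\mathcal O_{\nu_i}(C,A)=\mathcal U(C)^\perp$ for $\nu_i\ge\kappa_C$ and the existence proposition. Your Cayley--Hamilton derivation and the nesting argument only fill in details the paper cites or asserts. That derivation applies to any $N_i$, so the minimality concern in your last paragraph is not actually an obstacle for necessity.
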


\begin{proof}
By Proposition~\ref{prop:functional-observer-existence},
\(
    \ell_i\beta_i(A)
    \in
    \operatorname{row}\mathcal O_{\nu_i}(C,A)
    \subseteq\mathcal U(C)^\perp.
\)
Hence, $\ell_i\beta_i(A)V_{\mathcal U}=0$. Using
$\ell_i=f_i^\top\phi_i(A)$ and
\eqref{eq:unobservable-dynamics} gives
\(
    f_i^\top V_{\mathcal U}
    \phi_i(A_{\mathcal U})\beta_i(A_{\mathcal U})=0.
\)
Equation~\eqref{eq:residual-polynomial-property} then gives
$\mu_i^{\rm rem}\mid\beta_i$.

Conversely, if $\mu_i^{\rm rem}\mid\beta_i$, then
\eqref{eq:residual-polynomial-property} gives
$\ell_i\beta_i(A)V_{\mathcal U}=0$, and hence
$\ell_i\beta_i(A)\in\mathcal U(C)^\perp$. For
$\nu_i\geq\kappa_C$,
\(
    \mathcal U(C)^\perp
    =
    \operatorname{row}\mathcal O_{\nu_i}(C,A),
\)
so Proposition~\ref{prop:functional-observer-existence} gives the
desired observer.
\end{proof}

The residual polynomial characterizes both fixed-gain and
constraint-level HOCBF observability and detectability.

\begin{corollary}\longthmtitle{HOCBF observability and detectability}
\label{cor:hocbf-observability}
\leavevmode\par
\begin{enumerate}
    \item For fixed $\boldsymbol{\alpha}_i$,
    $(h_i,\boldsymbol{\alpha}_i)$ is HOCBF-observable if and only if
    $\mu_i^{\rm rem}=1$ (equivalently, $\mu_i\mid\phi_i$), and
    HOCBF-detectable if and only if $\mu_i^{\rm rem}$ is Hurwitz
    (equivalently, $\mu_i$ is Hurwitz).
    \item $h_i$ is HOCBF-observable if and only if all roots of $\mu_i$ are real
    and strictly negative with $\deg(\mu_i)\leq r_i$, and
    HOCBF-detectable if and only if $\mu_i$ is Hurwitz.
\end{enumerate}
\end{corollary}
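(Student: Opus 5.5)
For part~1 we work directly from the characterizations of functional observability and detectability that follow Definition~\ref{def:functional-properties}. The goal is to rewrite both in terms of the reduced unobservable dynamics $\dot\xi=A_{\mathcal U}\xi$. Because $AV_{\mathcal U}=V_{\mathcal U}A_{\mathcal U}$, any $v=V_{\mathcal U}\xi_0\in\mathcal U(C)$ gives
\[
\ell_ie^{At}v=f_i^\top V_{\mathcal U}\phi_i(A_{\mathcal U})e^{A_{\mathcal U}t}\xi_0 .
\]
Set $w:=f_i^\top V_{\mathcal U}\phi_i(A_{\mathcal U})$.

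\emph{Observability.} The condition $\mathcal U(C)\subseteq\mathcal U(\ell_i)$ says that $\ell_iA^kv=0$ for all $k$ and all $v\in\mathcal U(C)$. Its $k=0$ instance is $w=0$. Conversely, $w=0$ implies $wA_{\mathcal U}^k=0$ for every $k$. So observability is equivalent to $w=0$. By \eqref{eq:residual-polynomial-property} with $q=1$, this holds iff $\mu_i^{\rm rem}\mid 1$, i.e.\ $\mu_i^{\rm rem}=1$, i.e.\ $\mu_i\mid\phi_i$.

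\emph{Detectability.} It is equivalent to $we^{A_{\mathcal U}t}\to0$. Let $\gamma$ be the monic polynomial of least degree with $w\gamma(A_{\mathcal U})=0$. By \eqref{eq:residual-polynomial-property}, $\gamma=\mu_i^{\rm rem}$.
\begin{itemize}
\item Restricted to the cyclic subspace generated by $w$, i.e.\ the row Krylov space of $w$ under $A_{\mathcal U}$, the action of $A_{\mathcal U}$ has minimal polynomial exactly $\mu_i^{\rm rem}$.
\item Hence $we^{A_{\mathcal U}t}$ is a combination of $t^ke^{\lambda t}$ over all roots $\lambda$ of $\mu_i^{\rm rem}$ with their multiplicities, and each such term actually appears for some initial condition.
\item Therefore $we^{A_{\mathcal U}t}\to0$ iff $\mu_i^{\rm rem}$ is Hurwitz. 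This is the step needing the most care; I would argue it through a companion-form realization of the cyclic subspace.
\end{itemize}
Since $\phi_i$ has only negative real roots, $\gcd(\mu_i,\phi_i)$ is Hurwitz. So $\mu_i^{\rm rem}$ is Hurwitz iff $\mu_i$ is. (Alternatively, the Hurwitz equivalence follows from Theorem~\ref{thm:factor-allocation} with Proposition~\ref{prop:functional-observer-existence}, taking $\nu_i\geq\kappa_C$ and $\beta_i=\mu_i^{\rm rem}\cdot(\text{Hurwitz factor})$. That route needs the caveat that detectability is intrinsic and does not depend on observer existence.)

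Part~2 follows from part~1 by quantifying over positive gain vectors. For observability we need some $\phi_i=\prod_{j=1}^{r_i}(s+\alpha_{ij})$ with $\alpha_{ij}>0$ such that $\mu_i\mid\phi_i$.
\begin{itemize}
\item \emph{Necessity.} If such a $\phi_i$ exists, every root of $\mu_i$ is a root of $\phi_i$, hence real and strictly negative. Also $\deg\mu_i\leq\deg\phi_i=r_i$.
\item \emph{Sufficiency.} Suppose the roots of $\mu_i$ are real, strictly negative, and $\deg\mu_i\leq r_i$. Take the $\alpha_{ij}$ to be the negated roots of $\mu_i$, counted with multiplicity, and pad the remaining $r_i-\deg\mu_i$ entries with any positive values. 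Then $\mu_i\mid\phi_i$.
\end{itemize}
For detectability, part~1 shows the condition ``$\mu_i$ Hurwitz'' does not depend on $\boldsymbol{\alpha}_i$. So existence of a suitable gain vector is equivalent to $\mu_i$ being Hurwitz, and any positive gains work.
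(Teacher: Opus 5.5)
Your proposal is correct and follows the paper's proof: part 1 uses the residual-polynomial property with $q=1$ for observability, reduces detectability to decay of the residual modes, and passes between $\mu_i^{\rm rem}$ and $\mu_i$ because $\gcd(\mu_i,\phi_i)$ has only negative real roots; part 2 follows by root and degree matching, plus the gain-independence of the Hurwitz condition. Your cyclic-subspace (companion-realization) argument supplies the justification for the step ``detectable iff $\mu_i^{\rm rem}$ is Hurwitz,'' which the paper asserts without detail.
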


\begin{proof}
For fixed $\boldsymbol{\alpha}_i$, Definition~
\ref{def:fixed-hocbf-observability} and
\eqref{eq:residual-polynomial-property} give observability if and only
if $\mu_i^{\rm rem}=1$. Detectability holds if and only if the
remaining unobservable modes decay, i.e.,
$\mu_i^{\rm rem}$ is Hurwitz. Since $\phi_i$ has only real strictly
negative roots, removing common factors with $\phi_i$ cannot remove a
non-Hurwitz root; hence $\mu_i^{\rm rem}$ is Hurwitz if and only if
$\mu_i$ is Hurwitz.

By Definition~\ref{def:hocbf-observable-constraint}, $h_i$ is HOCBF-observable if and only if some admissible $\phi_i$ contains $\mu_i$. Since $\phi_i$ has $r_i$ real strictly negative roots, this holds exactly when all roots of $\mu_i$ are real and strictly negative and $\deg(\mu_i)\leq r_i$. Detectability follows from its gain independence.
\end{proof}

Thus, HOCBF observability requires $\mu_i$ to have only real negative
roots with $\deg(\mu_i)\leq r_i$, whereas detectability is
gain-independent and requires only $\mu_i$ to be Hurwitz.

\begin{algorithm}[t]
\caption{HOCBF--observer co-design}
\label{alg:hocbf-observer-codesign}
\begin{algorithmic}[1]
\State Compute $c_i=f_i^\top V_{\mathcal U}$ and obtain $\mu_i$
from \eqref{eq:krylov-dependence}.

\State Select up to $r_i$ real negative roots $\zeta$ of $\mu_i$,
counting multiplicity, and set $\alpha_{ij}=-\zeta$. To achieve
HOCBF observability, assign all factors when permitted by
Corollary~\ref{cor:hocbf-observability}. Choose any remaining gains
positive.

\State Compute
$\mu_i^{\rm rem}=\mu_i/\gcd(\mu_i,\phi_i)$.

%\State Choose $\nu_i=\max\{\deg(\mu_i^{\rm rem}),\kappa_C\}$ and a real monic $\beta_i$ such that $\mu_i^{\rm rem}\mid\beta_i$. If vanishing estimation error is required, $\mu_i$ must be Hurwitz; choose any remaining roots of $\beta_i$ as desired, e.g., to meet a specified convergence rate.

\State Set
$\nu_i=\max\{\deg(\mu_i^{\rm rem}),\kappa_C\}$ and choose a real monic
$\beta_i$ with $\mu_i^{\rm rem}\mid\beta_i$. For vanishing error,
require $\mu_i$ Hurwitz and assign the remaining roots to meet the
desired convergence rate.

\State Construct the observer matrices satisfying
\eqref{eq:observer-embedding} using a standard functional-observer
synthesis method \cite{MD:00,KV:21}.
\end{algorithmic}
\end{algorithm}

Theorem~\ref{thm:factor-allocation} gives
\(
    \nu_i\geq\deg(\mu_i^{\rm rem}),
\)
while
\(
    \nu_i=\max\{\deg(\mu_i^{\rm rem}),\kappa_C\}
\)
is always sufficient and is therefore a natural default. Larger orders
add freely assignable factors to $\beta_i$ at the cost of a
higher-dimensional observer, without removing the mandatory factor
$\mu_i^{\rm rem}$.

Algorithm~\ref{alg:hocbf-observer-codesign} summarizes the
co-design procedure.

\subsection{Estimation-Rate Allocation}
\label{subsec:estimation-rate-allocation}

Fix $\boldsymbol{\alpha}_i$ and suppose
$(h_i,\boldsymbol{\alpha}_i)$ is HOCBF-detectable.
By Corollary~\ref{cor:hocbf-observability},
$\mu_i^{\rm rem}$ is Hurwitz. Theorem~\ref{thm:factor-allocation}
requires every root of $\mu_i^{\rm rem}$ to be an observer-error pole,
thereby limiting the achievable estimation rate. Define
\[
    \Gamma_i
    :=
    \begin{cases}
        -\displaystyle
        \max_{\mu_i^{\rm rem}(\zeta)=0}
        \operatorname{Re}(\zeta),
        &\mu_i^{\rm rem}\neq1,\\[3mm]
        +\infty,
        &\mu_i^{\rm rem}=1.
    \end{cases}
\]
Thus, $\Gamma_i>0$ when $\mu_i^{\rm rem}\neq1$, while
$\mu_i^{\rm rem}=1$ imposes no mandatory observer-error pole.

An observer achieves exponential rate $\lambda$ if
\(
    |e_i(t)|\leq Me^{-\lambda t}
\)
for some $M>0$ over the considered initial errors.

\begin{theorem}\longthmtitle{Estimation-rate limit}
\label{thm:estimation-rate-limit}
Fix $\boldsymbol{\alpha}_i$ and suppose
$(h_i,\boldsymbol{\alpha}_i)$ is HOCBF-detectable. Every rate
$0\leq\lambda<\Gamma_i$ is achievable by a scalar functional observer
of order
\[
    \nu_i=
    \max\{\deg(\mu_i^{\rm rem}),\kappa_C\}.
\]
No  functional observer satisfying
\eqref{eq:observer-embedding} guarantees
$\lambda>\Gamma_i$.
\end{theorem}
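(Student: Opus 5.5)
The proof has two parts: an achievability part that builds an observer from Theorem~\ref{thm:factor-allocation}, and a converse that uses the mandatory factor $\mu_i^{\rm rem}\mid\beta_i$ together with minimality of the error realization.

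\emph{Achievability.} Fix $0\leq\lambda<\Gamma_i$ and set $\nu_i=\max\{\deg(\mu_i^{\rm rem}),\kappa_C\}$.
\begin{enumerate}
\item Choose
\[
\beta_i=\mu_i^{\rm rem}\,\prod_{k=1}^{\nu_i-\deg(\mu_i^{\rm rem})}(s+\gamma_k),
\qquad \gamma_k>\lambda .
\]
This polynomial is real and monic, has degree $\nu_i$, and satisfies \eqref{eq:factor-allocation-condition}.
\item Since $\nu_i\geq\kappa_C$, the converse part of Theorem~\ref{thm:factor-allocation} realizes $\beta_i$ by an order-$\nu_i$ observer satisfying \eqref{eq:observer-embedding}.
\item By HOCBF-detectability and Corollary~\ref{cor:hocbf-observability}, $\mu_i^{\rm rem}$ is Hurwitz. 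Every root of $\beta_i$ therefore has real part at most $-\min\{\Gamma_i,\min_k\gamma_k\}<-\lambda$.
\item By \eqref{eq:scalar-observer-error}, $e_i(t)=R_ie^{N_it}\varepsilon_i(0)$. Each entry of $e^{N_it}$ is a finite sum of terms $t^je^{\zeta t}$, where $\zeta$ ranges over roots of $\beta_i$, so $\|e^{N_it}\|\leq M' e^{-\lambda t}$. Taking $M:=M'\|R_i\|\sup\|\varepsilon_i(0)\|$ over the considered bounded set of initial errors gives $|e_i(t)|\leq Me^{-\lambda t}$.
\item If $\Gamma_i=+\infty$, then $\mu_i^{\rm rem}=1$ and all roots of $\beta_i$ are free. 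When $\nu_i=0$, $e_i\equiv 0$ and every rate is achieved trivially.
\end{enumerate}

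\emph{Converse.} Suppose $\Gamma_i<\infty$ and consider any observer satisfying \eqref{eq:observer-embedding}.
\begin{enumerate}
\item By Theorem~\ref{thm:factor-allocation}, there is a root $\zeta$ of $\mu_i^{\rm rem}$ with $\operatorname{Re}\zeta=-\Gamma_i$, and it is also a root of $\beta_i$.
\item Minimality of the realization $(N_i,R_i)$ makes the pair observable. Hence, for an eigenvector $w$ of $N_i$ associated with $\zeta$, we have $R_iw\neq0$.
\item For a complex $\zeta$, take $\varepsilon_i(0)=\operatorname{Re}(cw)$ for a suitable scalar $c$. Then $e_i(t)=e^{-\Gamma_i t}\operatorname{Re}\bigl(cR_iw\,e^{\mathrm{j}\omega t}\bigr)$, where $\mathrm{j}$ is the imaginary unit and $\omega:=\operatorname{Im}\zeta$. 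This oscillating function does not decay faster than $e^{-\Gamma_i t}$: it equals a fixed nonzero amplitude times $e^{-\Gamma_i t}$ on an unbounded sequence of times.
\item Consequently, any bound $|e_i(t)|\leq Me^{-\lambda t}$ with $\lambda>\Gamma_i$ would force $e^{(\lambda-\Gamma_i)t}$ to remain bounded along that sequence, which is a contradiction.
\end{enumerate}

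\emph{Main obstacle.} The delicate step is justifying that this error initial condition is reachable, i.e., lies in the "considered initial errors." Since $\varepsilon_i=T_ix-\eta_i$ and $\eta_i(0)$ is arbitrary, every $\varepsilon_i(0)\in\mathbb R^{\nu_i}$ arises. Scaling preserves the conclusion, so a small multiple can be chosen to fit any considered set that contains a neighborhood of the origin. I will state this interpretation explicitly.

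A secondary care point is the case $\nu_i>\kappa_C$ that an arbitrary observer might use. The converse argument never relies on $\nu_i=\kappa_C$, since it uses only the divisibility from the necessity part of Theorem~\ref{thm:factor-allocation}, which holds for every order.
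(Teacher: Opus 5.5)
Your proposal is correct and follows the paper's proof essentially step for step. For achievability, you complete $\mu_i^{\rm rem}$ with roots to the left of $-\lambda$ and invoke the converse of Theorem~\ref{thm:factor-allocation}; for the converse, you use the mandatory divisibility $\mu_i^{\rm rem}\mid\beta_i$ together with minimality of $(N_i,R_i)$ to exhibit an initial error excited at rate $\Gamma_i$. Your eigenvector construction, and your remark that every $\varepsilon_i(0)$ is reachable through $\eta_i(0)$ when the considered set contains a neighborhood of the origin, make explicit what the paper compresses into ``this mode affects $e_i$ for some initial error.''
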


\begin{proof}
Let $\lambda<\Gamma_i$. Every root of $\mu_i^{\rm rem}$ satisfies
$\operatorname{Re}(\zeta)<-\lambda$. Complete
$\mu_i^{\rm rem}$ to a real monic polynomial $\beta_i$ of degree
$\max\{\deg(\mu_i^{\rm rem}),\kappa_C\}$ by adding roots with real part more negative than $-\lambda$. Theorem~\ref{thm:factor-allocation} then gives an
observer whose error poles satisfy
$\operatorname{Re}(\zeta)<-\lambda$.
Conversely, if $\mu_i^{\rm rem}\neq1$,
Theorem~\ref{thm:factor-allocation} requires
$\mu_i^{\rm rem}\mid\beta_i$. Thus, every admissible error polynomial
contains a root with real part $-\Gamma_i$. By minimality of the error
realization, this mode affects $e_i$ for some initial error, precluding
any guaranteed rate greater than $\Gamma_i$. For
$\mu_i^{\rm rem}=1$, $\Gamma_i=+\infty$ and the upper bound is
vacuous.
\end{proof}

\subsection{Set-Maximizing Ordering for Fixed Gains}
\label{subsec:gain-ordering}

After the gain values are selected, their ordering remains free.
Permuting them preserves $\phi_i$, the HOCBF functional, observability
properties, and observer design, but may change the HOCBF admissible set.

Let $\boldsymbol{\alpha}_i^\downarrow$ and
$\boldsymbol{\alpha}_i^\uparrow$ denote the gains in nonincreasing and
nondecreasing order, respectively, and
$\boldsymbol{\alpha}_i^\pi$ any other ordering. Denote the
corresponding collections over all constraints by
$\boldsymbol{\alpha}^\downarrow$,
$\boldsymbol{\alpha}^\uparrow$, and
$\boldsymbol{\alpha}^\pi$.

\begin{theorem}\longthmtitle{Set-maximizing order for fixed gains}
\label{thm:set-maximizing-order}
Let $\mathcal C_{\boldsymbol{\alpha}}$ be defined by
\eqref{eq:hocbf-set}. For any ordering $\boldsymbol{\alpha}^\pi$ of
the selected gains,
\begin{equation}
    \mathcal C_{\boldsymbol{\alpha}^\uparrow}
    \subseteq
    \mathcal C_{\boldsymbol{\alpha}^\pi}
    \subseteq
    \mathcal C_{\boldsymbol{\alpha}^\downarrow}.
    \label{eq:gain-order-set-inclusion}
\end{equation}
\end{theorem}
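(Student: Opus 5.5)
The plan is to work constraint by constraint and to reduce every ordering comparison to adjacent transpositions. Intersection preserves inclusion, and $\mathcal C_{\boldsymbol{\alpha}}=\bigcap_i\mathcal C_{i,\boldsymbol{\alpha}_i}$. So it suffices to prove, for each fixed $i$, that $\mathcal C_{i,\boldsymbol{\alpha}_i^\uparrow}\subseteq\mathcal C_{i,\boldsymbol{\alpha}_i^\pi}\subseteq\mathcal C_{i,\boldsymbol{\alpha}_i^\downarrow}$.

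\emph{Step 1: closed form for the recursion.} I first write \eqref{eq:hocbf-recursion} explicitly. For $j\le r_i-1$, the relative-degree conditions $f_i^\top A^kB=0$ for $k\le r_i-2$ show by induction that $\psi_{i,j}$ does not depend on $u$. It equals
\[
\psi_{i,j}(x)=f_i^\top\prod_{k=1}^{j}(A+\alpha_{ik}I)\,x+d_i\prod_{k=1}^{j}\alpha_{ik}.
\]
In operator form, $\psi_{i,j}=\prod_{k\le j}(\mathcal L+\alpha_{ik})h_i$, where $\mathcal L g(x):=\nabla g(x)^\top Ax$. These factors commute. Hence $\psi_{i,j}$ depends only on the multiset $\{\alpha_{i1},\ldots,\alpha_{ij}\}$ and not on the order of its elements.

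\emph{Step 2: an adjacent swap toward nonincreasing order enlarges the set.} Suppose two orderings differ only by swapping positions $j$ and $j+1$, with $1\le j\le r_i-1$. Let $a$ and $b$ be the gains there, with $a\ge b$. Take $\boldsymbol{\alpha}'$ to have $(a,b)$ at positions $(j,j+1)$, and $\boldsymbol{\alpha}''$ to have $(b,a)$.

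By Step 1, $\psi_{i,k}$ coincides for the two orderings whenever $k\neq j$, since the multiset of the first $k$ gains is unchanged. If $j=r_i$ were possible it would not matter, because $\psi_{i,r_i}$ does not enter \eqref{eq:hocbf-set}. So the two sets differ only in the $j$th inequality.

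Set $g:=\psi_{i,j-1}$, which is common to both orderings. Then
\[
\psi'_{i,j}=\mathcal Lg+a\,g=\psi''_{i,j}+(a-b)\,g .
\]
Because $j-1\le r_i-2$, the condition $g\ge0$ is one of the defining inequalities of both sets. Therefore, on $\mathcal C_{i,\boldsymbol{\alpha}''}$ we have $\psi'_{i,j}\ge\psi''_{i,j}\ge0$, and all other inequalities agree. This gives $\mathcal C_{i,\boldsymbol{\alpha}''}\subseteq\mathcal C_{i,\boldsymbol{\alpha}'}$.

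\emph{Step 3: bubble sort.} Any ordering $\boldsymbol{\alpha}_i^\pi$ can be carried to $\boldsymbol{\alpha}_i^\downarrow$ by a finite sequence of adjacent swaps. Each swap exchanges an adjacent pair $(b,a)$ with $b<a$ into $(a,b)$, so by Step 2 each swap weakly enlarges the set; this yields the right inclusion. Similarly, $\boldsymbol{\alpha}_i^\uparrow$ is carried to $\boldsymbol{\alpha}_i^\pi$ by swaps that each move a larger gain ahead of a smaller one, which yields the left inclusion. Ties are harmless, since swapping equal gains changes nothing.

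The main point requiring care is Step 2. There I must check that $\psi_{i,j-1}\ge0$ is genuinely among the constraints defining both sets, which holds because $j-1\le r_i-2$. I must also check that every other $\psi_{i,k}$ is invariant under the swap, which rests on the commutativity established in Step 1.
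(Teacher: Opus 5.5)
Your proposal is correct and follows essentially the same route as the paper. Within each constraint you reduce to adjacent transpositions, use the identity $\widetilde\psi_k-\psi_k=(a-b)\psi_{k-1}$ at the swapped level and order-invariance (commutativity of the factors) at every other level, and then bubble-sort toward $\boldsymbol{\alpha}_i^\downarrow$ or away from $\boldsymbol{\alpha}_i^\uparrow$. Your closed form $\psi_{i,j}=f_i^\top\prod_{k\le j}(A+\alpha_{ik}I)x+d_i\prod_{k\le j}\alpha_{ik}$ and your check that $\psi_{i,j-1}\geq0$ is among the defining inequalities only make explicit what the paper's commutativity argument uses implicitly.
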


\begin{proof}
Fix constraint $i$ and two orderings differing only by adjacent gains
$b\leq a$ at positions $k$ and $k+1$. Let $\psi_j$ and
$\widetilde\psi_j$ correspond to $(b,a)$ and $(a,b)$, respectively.
They coincide for $j<k$, while
\(
    \widetilde\psi_k-\psi_k=(a-b)\psi_{k-1}.
\)
Thus, $\psi_{k-1}(x),\psi_k(x)\geq0$ imply
$\widetilde\psi_k(x)\geq0$. At level $k+1$, the two recursions
coincide because
\(
\left(\frac{d}{dt}+a\right)
\left(\frac{d}{dt}+b\right)
=
\left(\frac{d}{dt}+b\right)
\left(\frac{d}{dt}+a\right).
\)
Since all subsequent gains are identical, all later HOCBF functions
also coincide. Thus, moving a larger gain earlier cannot decrease the
admissible set.
Applying this argument to each constraint, if
$x\in\mathcal C_{\boldsymbol{\alpha}^\uparrow}$, repeatedly moving
larger gains earlier yields any $\boldsymbol{\alpha}^\pi$, so
$x\in\mathcal C_{\boldsymbol{\alpha}^\pi}$. Likewise, if
$x\in\mathcal C_{\boldsymbol{\alpha}^\pi}$, the same swaps yield
$\boldsymbol{\alpha}^\downarrow$, so
$x\in\mathcal C_{\boldsymbol{\alpha}^\downarrow}$. Hence,
\eqref{eq:gain-order-set-inclusion} holds.
\end{proof}

For an HOCBF-observable constraint with
$\deg(\mu_i)=r_i$, observability fixes $\phi_i=\mu_i$, leaving only
the ordering to optimize. Otherwise, free gain values remain; optimizing
them changes $\phi_i$ and the observer design and may require additional
gain bounds to admit a finite maximizer.

\section{Aircraft Roll--Yaw Example}
\label{sec:aircraft-example}

%This section illustrates HOCBF observability, factor allocation, observer order and rate, and gain ordering on a linearized aircraft roll--yaw model with an unobservable full state.

This section illustrates the co-design results on a linearized aircraft
roll--yaw model with unobservable full state.

Consider the linearized roll--yaw dynamics of a mid-size aircraft at
the operating condition in \cite[Sec.~14.8]{Lavretsky-Wise:24}, with
velocity $717.17~\mathrm{ft/s}$, altitude $25{,}000~\mathrm{ft}$, and
angle of attack $4.5627^\circ$. Using the aileron channel,
\begin{equation*}
\begin{aligned}
    \dot x&=Ax+Bu,\qquad
    x=\begin{bmatrix}\beta_{\rm s}&p_{\rm s}&r_{\rm s}\end{bmatrix}^{\top},\\
A&=
{\footnotesize
\begin{bmatrix}
-0.11794&0.000848&-1.0001\\
-7.0113&-1.4492&0.22059\\
6.3035&0.065114&-0.41172
\end{bmatrix}},\quad
B=
{\footnotesize
\begin{bmatrix}
0\\-7.9662\\0.60926
\end{bmatrix}}.
\end{aligned}
\end{equation*}
Here, $\beta_{\rm s}$ is the sideslip angle and $p_{\rm s}$ and
$r_{\rm s}$ are the stability-axis roll and yaw rates.

The matrix $A$ has a stable real eigenvalue
$\lambda_{\rm u}\approx-1.3902$. Let $v_{\rm u}$ be a corresponding
eigenvector and choose $C$ such that
$\ker(C)=\operatorname{span}\{v_{\rm u}\}$. Numerically,
\begin{equation*}
C\simeq
{\footnotesize
\begin{bmatrix}
 0.99991&0.00884&0.01018\\
-0.01026&0.01018&0.99990
\end{bmatrix}}.
\end{equation*}
Thus, $\operatorname{rank}\mathcal O_2(C,A)=2<3$ and
$\kappa_C=0$, so the full state is unobservable. Reported values are
rounded, while computations use full precision.

We impose the two-sided sideslip constraint
\[
    |\beta_{\rm s}|\leq\beta_{\max},
    \qquad
    \beta_{\max}=0.25^\circ.
\]
Both constraints have relative degree two, with relevant unobservable
polynomial
\(
    \mu(s)=s-\lambda_{\rm u}.
\)

For Design~1, choose
\(
    \phi^{(1)}(s)=(s+2)(s+4).
\)
Since $\phi^{(1)}(\lambda_{\rm u})\neq0$,
$\mu^{\rm rem}(s)=s-\lambda_{\rm u}$. Hence, for these gains the
HOCBF is detectable but not observable. Among observers satisfying
\eqref{eq:observer-embedding}, order one is necessary and sufficient,
with unavoidable error pole $\lambda_{\rm u}$ and rate limit
$\Gamma=-\lambda_{\rm u}\approx1.3902$. One realization for the upper
constraint is
\begin{equation*}
\begin{aligned}
    \dot\eta
        &\simeq\lambda_{\rm u}\eta
        +\begin{bmatrix}-0.0985&0.0041\end{bmatrix}y
        -0.1108u,\\
    \hat z
        &\simeq\eta+
        \begin{bmatrix}-0.9399&5.4810\end{bmatrix}y .
\end{aligned}
\end{equation*}

%For Design~2, choose 
%\(
 %   \phi^{(2)}(s)=(s-\lambda_{\rm u})(s+4).
%\)
%The hidden factor is now assigned to the HOCBF polynomial, so $\mu^{\rm rem}=1$ and the HOCBF is observable. Hence, the sideslip constraint is HOCBF-observable. Since $\kappa_C=0$,
%\(
 %   \hat z\simeq
    %\begin{bmatrix}1.4209&4.8469\end{bmatrix}y,
%\)
%and no dynamic observer state is required. Thus, within the considered
%observer architecture, the required dynamic order drops from one to zero, illustrating Theorem~\ref{thm:factor-allocation}.

For Design~2, choose
\(
    \phi^{(2)}(s)=(s-\lambda_{\rm u})(s+4).
\)
Assigning the hidden factor to $\phi^{(2)}$ gives
$\mu^{\rm rem}=1$, making the sideslip constraint HOCBF-observable.
Since $\kappa_C=0$,
\(
    \hat z\simeq
    \begin{bmatrix}1.4209&4.8469\end{bmatrix}y,
\)
so no dynamic observer state is needed and the observer order drops
from one to zero, illustrating Theorem~\ref{thm:factor-allocation}.

For the simulations, $x(0)=0$ and
\[
    u_{\rm nom}(t)
    =-0.11[\tanh(10(t-1))-\tanh(10(t-6))]~\mathrm{rad},
\]
which violates the sideslip limit. Figure~\ref{fig:aircraft-results}(a) shows a nominal peak of about
$1.45^\circ$, while both filters maintain
$|\beta_{\rm s}|\leq0.25^\circ$, confirming
Proposition~\ref{prop:one-sided-safety}. For Design~1,
$e(0)=-0.008$ and $\rho(t)=0.01e^{\lambda_{\rm u}t}$; the same $\rho$
applies to both constraints. Figure~\ref{fig:aircraft-results}(b) confirms the bound in
Proposition~\ref{prop:componentwise-error-bound} and the rate predicted
by Theorem~\ref{thm:estimation-rate-limit}. The QP remains feasible
although $D$ is not full row rank.

%Finally, permuting the gains $\{-\lambda_{\rm u},4\}$ in Design~2 leaves the final HOCBF inequality and observer unchanged. The intermediate conditions are
For Design~2, gain order does not affect the HOCBF inequality or
observer; the intermediate conditions are
\[
    -\alpha_1(\beta_{\max}+\beta_{\rm s})
    \leq\dot\beta_{\rm s}
    \leq
    \alpha_1(\beta_{\max}-\beta_{\rm s}).
\]
The nondecreasing order has $\alpha_1=-\lambda_{\rm u}$, whereas the
nonincreasing order has $\alpha_1=4$.
Figure~\ref{fig:aircraft-results}(c) shows the corresponding
admissible regions in the $(\beta_{\rm s},\dot\beta_{\rm s})$ plane:
the region for $\boldsymbol{\alpha}^{\downarrow}$ contains that for
$\boldsymbol{\alpha}^{\uparrow}$, as predicted by
Theorem~\ref{thm:set-maximizing-order}.

\begin{figure}[t]
    \centering
    \includegraphics[width=\columnwidth]{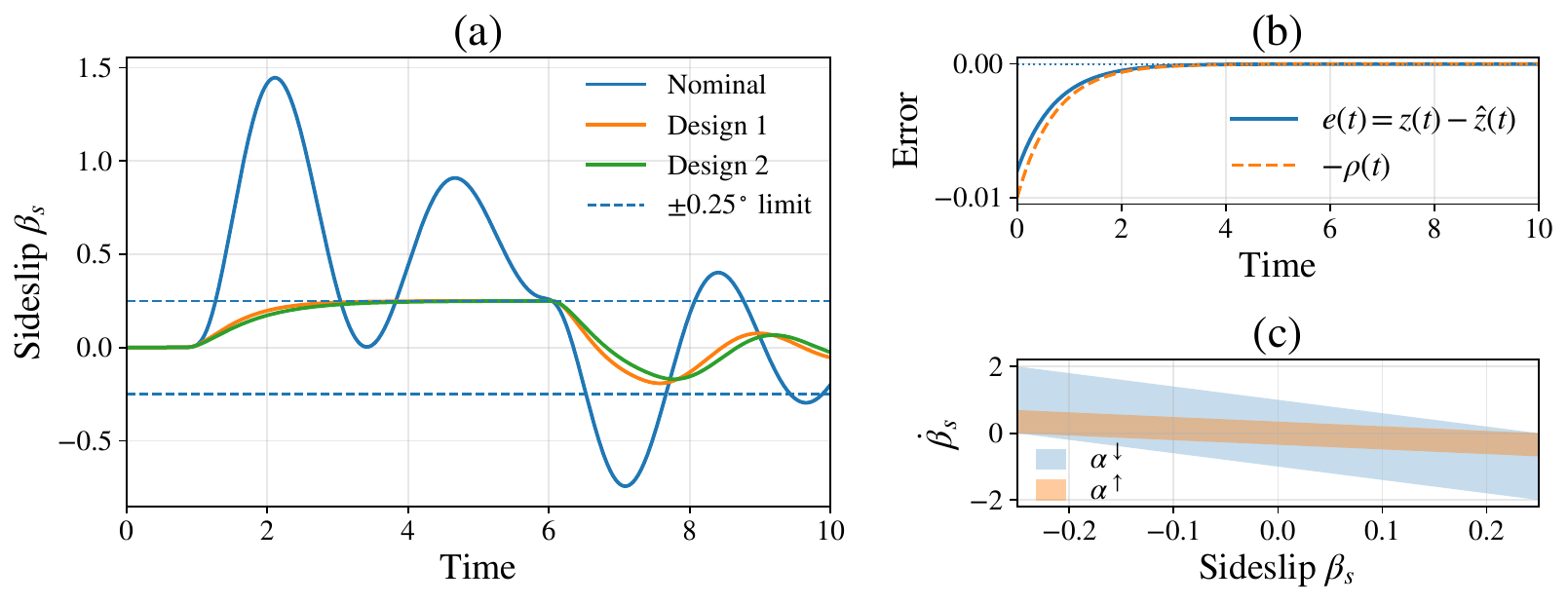}
    \caption{Aircraft roll--yaw simulation: (a) sideslip response;
    (b) estimation error and bound; (c) HOCBF admissible
    regions for the two gain orderings.}
    \label{fig:aircraft-results}
\end{figure}

%%%%%%%%%%%%%%%%%%%%%%%%%%%%%%%%%%%%%%%%%%%%%%%%%%%%%%%%%%%%%%%%%%%%%%%
\section{Conclusion}
\label{sec:conclusion}

This letter developed an output-feedback HOCBF safety-filtering
framework based on scalar functional observers, avoiding unnecessary
full-state reconstruction. A polynomial co-design characterizes
fixed-gain HOCBF observability and detectability, their achievability
through gain selection, and observer-order and rate limits. Certified
one-sided error bounds ensure safety, while gain ordering enlarges the
HOCBF admissible set. Future work will consider nonlinear systems and
shared functional observers.

%%%%%%%%%%%%%%%%%%%%%%%%%%%%%%%%%%%%%%%%%%%%%%%%%%%%%%%%%%%%%%%%%%%%%%%

\bibliographystyle{IEEEtran}
\bibliography{reference}

\end{document}